\newtheorem{theorem}{Theorem}[section]
\newtheorem{lemma}[theorem]{Lemma}
\newtheorem{corollary}[theorem]{Corollary}
\begin{document}

\begin{frontmatter}
\title{Majorization and dynamics of continuous distributions}
%\title{Hamiltonian poles and Lyapunov exponents in non Hermitian quantum dynamics}

\author[ufba]{Ignacio S. Gomez\corref{cor1}}
\ead{nachosky@fisica.unlp.edu.ar}
\author[if-sertao]{Bruno G. da Costa}
\ead{bruno.costa@ifsertao-pe.edu.br}
\author[ufrs]{M. A. F. dos Santos}
\ead{santos.maike@ufrgs.br}

%\author[iflp,unlp]{M. Portesi}
%\ead{portesi@fisica.unlp.edu.ar}
%\author[ufba]{Ernesto P. Borges}
%\ead{ernesto@ufba.br}

\cortext[cor1]{Corresponding author}
\address[ufba]{Instituto de F\'{i}sica, Universidade Federal da Bahia,
         Rua Barao de Jeremoabo, 40170-115 Salvador--BA, Brasil}
\address[if-sertao]{Instituto Federal de Educa\c{c}\~ao, Ci\^encia e Tecnologia do
             Sert\~ao Pernambucano,
             BR 407, km 08, 56314-520 Petrolina, Pernambuco, Brasil}
\address[ufrs]{Instituto de F\'isica, Universidade Federal do Rio Grande do Sul,
Caixa Postal 15051, CEP 91501-970, Porto Alegre, RS, Brazil}

%\address[iflp]{IFLP, CONICET, UNLP, Boulevard 113 e/63 y 64, 1900 La Plata, Argentina}
%\address[unlp]{Facultad de Ciencias Exactas, Universidad
%	Nacional de La Plata, C.C. 67, 1900 La Plata, Argentina}

\begin{abstract}
In this work we show how the concept
of majorization in continuous distributions
can be employed to characterize chaotic, diffusive and quantum
dynamics.
The key point lies in that majorization
allows to define an intuitive arrow of time,
within a continuous dynamics, along with
an associated \emph{majorized Second Law} which
implies the standard Second Law of thermodynamics
but not viceversa.
Moreover, mixing dynamics, generalized Fokker-Planck equations
and quantum evolutions are explored as
majorized ordered chains along the time evolution,
being the stationary states the infimum elements.
\end{abstract}

\begin{keyword}
majorization \sep ordered chain \sep continuous dynamics \sep convexity
\sep Second Law \sep Majorized Second Law
\end{keyword}

\end{frontmatter}

\nolinenumbers

\section{Introduction}
\label{intro}

The concept of majorization has shown a particular interest along
the last decades, mainly due to its
wide range of applicability in
information and quantum theory, among others.
The majorization is an operation
between a pair of finite-dimensional vectors that
gives a partial order in a
finite dimensional vectorial space.
When the finite vectors considered are discrete
probability distributions,
the majorization adopts the intuitive idea that, given two
discrete PDFs the distribution
which is majorized represents the probability vector more
spreading of the pair, and consequently it presents
the lowest Shannon entropy. Further developments
showed an intimately relation between majorization and
Schur-convex functions, from which subsequent
applications in quantum information protocols
turned out the majorization between two quantum states
an important criteria to be established [Nielsen].

Based in terms of the majorization of discrete finite-dimensional
vectors, Hardy, Littlewood and P\'{o}lya introduced the
continuous version for integrable functions, giving as result
a characterization of the convex ordering for random variables
in terms of majorization of vectors in the context of order statistics [Hardy].
Subsequent applications to stochastic orders in general were established from the
viewpoint of the continuous majorization [citacao].
In this sense, both types of majorization, the discrete and continuous ones,
bring different ways to relate the increasing direction of the majorization ordering
with the monotonic behavior (increasing or decreasing) of
convex functionals defined over probability distributions.

The goal of this letter is to formalize the relationship between the
increasing direction of the majorization ordering and the temporal evolution of
a continuous dynamics in order to characterize mixing dynamics, diffusion phenomena
and quantum evolutions. Thus, the present contribution also can shed light
towards a geometrical definition of the arrow time in the context of
convex functions.

The letter is structured as follows.
In Section~2 we review the concept of
continuous majorization of integrable functions, along
with some properties.
Then, in Section~3 we consider
a general motion equation
for a continuous probability distribution
and we establish necessary and sufficient conditions
for the set of time-parameterized distributions of a given
initial solution is an ordered chain by majorization.
From this
%characterization
%we give a general expression
%for the evolution operator, being
we characterize the stationary and the initial states
as the infimum and the supreme ones
of all ordered chain by continuous
majorization. Moreover, a generalized
Second Law of thermodynamics in the context of majorization,
is proposed.
Section~4 is devoted to illustrate
the scope of the formalism presented.
First, we consider a continuous dynamical system and we obtain a
necessary condition for mixing in terms of majorization.
Next, we characterize general Fokker-Planck equations
(without drift terms) as totally
ordered chains (by time) of probability distributions, and
their associated Fisher information are also obtained.
Last, quantum evolutions (unitary and non-unitary) are explored
from the viewpoint of continuous majorization.
Finally, in Section~5 some conclusions and perspectives are outlined.

%\section{Preliminaries}\label{sec prelim}

\section{Majorization of integrable functions and discrete vectors}\label{sec prelim}
Here we give the necessary elements for the development of the present work.
We begin by recalling the concept of continuous majorization along with some properties.
\subsection{Continuous majorization}
We say that a real function $f(x)$ is convex (concave respectively) on a real interval $I$
if for all $x,y\in I$ we have $f(\alpha x+\beta y)\leq \alpha f(x)+\beta f(y)$ ($\geq$ respectively).
Consider the set $L^1((0,1))$ of all real Lebesgue integrable function on $(0,1)$
and denote by $\mathcal{L}_{\textrm{cx}}(I)$ the set of all convex functions on $I$.
Given $f,g:I\rightarrow \mathbb{R} \in L^1((0,1))$ it is said that $f$ is \emph{majorized}
by $g$, denoted by $f\prec g$, iff [Hardy]

\begin{equation}\label{cont-majorization-def}
\int_{0}^{1}\phi(f(t))dt \leq \int_{0}^{1}\phi(g(t))dt
\quad , \quad \forall \ \phi\in \mathcal{L}_{\textrm{cx}}(I)
\end{equation}
whenever the integrals exist.
When $\mathcal{L}_{\textrm{cx}}(I)$ is substituted
by $\mathcal{L}_{\textrm{icx}}(I)$ (denoting the increasing convex functions on $I$)
then the partial order obtained is \emph{weak majorization}, symbolized by $\prec_W$.
It can be seen that $\prec$ is a reflexive and transitive relation
\footnote{I.e. for all $f,g,h\in L^1((0,1))$ we have
$f\prec f$ and $f\prec g,g\prec h\Rightarrow f\prec h$.} in $L^1((0,1))$. Moreover,
$\prec$ is a partial order since $f\prec g$ and $g\prec f$ do not imply
necessarily $f=g$ a.e.
If a set of distributions $\mathcal{D}$ has two elements $g,h$ such that
$g\prec f \prec h$ for all $f\in\mathcal{D}$ then it is said that
$g$ and $h$ are the \emph{infimum} and the \emph{supreme} of $\mathcal{D}$.
We say that $\mathcal{D}$ is an \emph{ordered chain} by majorization
if for all $f,g\in\mathcal{D}$ we have $f\prec g$ or $g\prec f$.

\subsection{Discrete majorization}

Complementary, Hardy also defined the discrete majorization and showed
its relation with the continuous case, as follows.
Let $\mathbf{x},\mathbf{y}\in \mathbb{R}^n$ two
$n$-dimensional vectors. Then we say that $\mathbf{x}\prec \mathbf{y}$ iff
\begin{eqnarray}\label{disc-majorization-def}
&S_{k:n}(\mathbf{x})\leq S_{k:n}(\mathbf{y}), \quad 1\leq k\leq n, \quad
\textrm{with} \ S_{n:n}(\mathbf{x})=S_{n:n}(\mathbf{y})\nonumber\\
& \textrm{where} \quad S_{k:n}(\mathbf{z})=\sum_{i=n+1-k}^{n}z_{i:n} , \quad \mathbf{z}\in\mathbb{R}^n
\end{eqnarray}
Here $z_{1:n}\leq z_{2:n}\leq \ldots \leq z_{n:n}$ denote the components of $\mathbf{z}$
arranged in increasing order. As in the continuous case, the discrete majorization
is a partial order in the set $\{\mathbf{z}\in \mathbb{R}^n: z_i\leq z_{i+1} i=1,\ldots,n-1\}$
since from $\mathbf{x}\prec \mathbf{y}$ and $\mathbf{y}\prec \mathbf{x}$
it follows only
that $\mathbf{y}$ is a permutation of $\mathbf{x}$.

The relationship between the continuous majorization
and the discrete one is
described by the following result [].
\begin{theorem}
Let $\mathbf{x},\mathbf{y}\in \mathbb{R}^n$ two
$n$-dimensional vectors and $I\subseteq\mathbb{R}$ an interval.
Then the following propositions
are equivalent:
\begin{itemize}
  \item[$(a)$]  $\mathbf{x}$ is majorized by $\mathbf{y}$.
  \item[$(b)$] $\phi(\mathbf{x})\leq \phi(\mathbf{y})$ for each Schur-convex
  function $\phi:I^n\rightarrow \mathbb{R}$.
  \item[$(c)$] $\phi(\mathbf{x})\leq \phi(\mathbf{y})$ for each symmetric
  quasi-convex function ~$\phi:I^n\rightarrow \mathbb{R}$.
  \item[$(d)$] $\sum_{i=1}^{n}\phi(x_i)\leq \sum_{i=1}^{n}\phi(y_i)$
  for each convex function ~$g:I\rightarrow\mathbb{R}$.
\end{itemize}
\end{theorem}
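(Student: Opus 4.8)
The plan is to prove the four-way equivalence via the cyclic chain $(a)\Rightarrow(b)\Rightarrow(c)\Rightarrow(d)\Rightarrow(a)$. The first three links are comparatively soft and follow from inclusions between the function classes involved, while $(d)\Rightarrow(a)$ is the classical Hardy--Littlewood--P\'olya direction and carries the real content.

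For $(a)\Rightarrow(b)$ I would invoke the very definition of a Schur-convex function: $\phi$ is Schur-convex exactly when $\mathbf{x}\prec\mathbf{y}$ implies $\phi(\mathbf{x})\leq\phi(\mathbf{y})$, so the implication is automatic. For $(b)\Rightarrow(c)$ the crucial auxiliary statement is that every symmetric quasi-convex function is Schur-convex; once this is in hand, the class in $(c)$ is contained in the class in $(b)$ and the implication is immediate. To establish the auxiliary statement I would use the characterization of majorization by $T$-transforms (Robin Hood transfers): $\mathbf{x}\prec\mathbf{y}$ holds iff $\mathbf{x}$ arises from $\mathbf{y}$ by finitely many transfers $z_i\mapsto\lambda z_i+(1-\lambda)z_j$, $z_j\mapsto(1-\lambda)z_i+\lambda z_j$ (with $\lambda\in[0,1]$) composed with permutations. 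A single such transfer sends $\mathbf{y}$ to a convex combination $\lambda\mathbf{y}+(1-\lambda)\mathbf{y}_\sigma$ of $\mathbf{y}$ and the copy $\mathbf{y}_\sigma$ obtained by transposing the two affected coordinates; symmetry gives $\phi(\mathbf{y}_\sigma)=\phi(\mathbf{y})$, and quasi-convexity then forces $\phi(\lambda\mathbf{y}+(1-\lambda)\mathbf{y}_\sigma)\leq\max\{\phi(\mathbf{y}),\phi(\mathbf{y}_\sigma)\}=\phi(\mathbf{y})$. Iterating yields $\phi(\mathbf{x})\leq\phi(\mathbf{y})$, i.e. $\phi$ is Schur-convex. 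Finally $(c)\Rightarrow(d)$ follows because, for any convex $\varphi:I\to\mathbb{R}$, the separable map $\mathbf{z}\mapsto\sum_{i}\varphi(z_i)$ is symmetric and convex, hence symmetric quasi-convex, so the inequality guaranteed by $(c)$ is precisely the one asserted in $(d)$.

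The main obstacle is $(d)\Rightarrow(a)$, where the order relation must be reconstructed from the scalar inequalities. First, applying $(d)$ to the convex functions $\varphi(t)=t$ and $\varphi(t)=-t$ yields $\sum_i x_i=\sum_i y_i$, that is $S_{n:n}(\mathbf{x})=S_{n:n}(\mathbf{y})$. For the partial-sum inequalities I would use the convex test functions $\varphi_c(t)=(t-c)_+=\max\{t-c,0\}$ together with the variational representation of the sum of the $k$ largest components, namely $S_{k:n}(\mathbf{z})=\min_{c\in\mathbb{R}}\big[\,kc+\sum_{i=1}^{n}(z_i-c)_+\,\big]$, whose minimum is attained at $c=z_{(n+1-k):n}$. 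Since $(d)$ gives $\sum_i(x_i-c)_+\leq\sum_i(y_i-c)_+$ for every $c$, adding $kc$ and taking the infimum over $c$ preserves the inequality and delivers $S_{k:n}(\mathbf{x})\leq S_{k:n}(\mathbf{y})$ for each $k$, which together with the equality for $k=n$ is exactly the definition~\eqref{disc-majorization-def} of $\mathbf{x}\prec\mathbf{y}$.

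I expect $(d)\Rightarrow(a)$ to be the crux, as it is the only step that builds the order structure from analytic data rather than restricting a known inequality to a smaller class. The two points deserving careful verification are the correctness of the min-plus representation of $S_{k:n}$ (in particular the handling of ties among the components, where the minimizing $c$ sits on a flat portion of the piecewise-linear objective) and the legitimacy of the $T$-transform characterization invoked in $(b)\Rightarrow(c)$, which ultimately rests on the doubly-stochastic (Birkhoff) description of majorization.
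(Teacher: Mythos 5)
Your proof is correct. Note first that the paper itself offers no proof of this theorem: it is quoted as a known result of Hardy--Littlewood--P\'olya (the citation bracket is even left empty), so there is nothing in the source to compare your argument against line by line. What you have written is essentially the standard textbook proof (Marshall--Olkin, Chapters 2--4), and each link holds up: $(a)\Rightarrow(b)$ is the definition of Schur-convexity; $(b)\Rightarrow(c)$ reduces to the inclusion of symmetric quasi-convex functions in the Schur-convex ones, and your T-transform argument for that inclusion is the standard one (a single transfer replaces $\mathbf{y}$ by $\lambda\mathbf{y}+(1-\lambda)Q\mathbf{y}$ with $Q$ a transposition matrix, and symmetry plus quasi-convexity give monotonicity along the chain); $(c)\Rightarrow(d)$ is immediate since separable convex sums are symmetric and convex, hence quasi-convex. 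Your treatment of the crux $(d)\Rightarrow(a)$ is also sound: the identity $S_{k:n}(\mathbf{z})=\min_{c}\bigl[kc+\sum_{i}(z_i-c)_+\bigr]$ is correct even with ties (the minimum is attained at $c=z_{(n-k+1):n}$, where the tied terms contribute zero to the positive part), and pointwise domination of the objectives does transfer to their minima. Two cosmetic cautions: the test functions $t$, $-t$ and $(t-c)_+$ must be restricted to the interval $I$ containing all components, which is harmless; and your appeal to the T-transform characterization of majorization (equivalently Birkhoff's theorem) is a nontrivial imported lemma that should be cited explicitly, since it carries as much weight as the $(d)\Rightarrow(a)$ direction itself. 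An alternative, slightly more economical route closes the cycle as $(a)\Rightarrow(c)\Rightarrow(b)$ is not available, but one can bypass T-transforms entirely by proving $(a)\Rightarrow(d)$ directly via Abel summation against the partial-sum inequalities and then obtaining $(b)$, $(c)$ from $(a)$ by the doubly stochastic representation; your arrangement is equally valid.
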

Note that $(d)$ is the discrete version of the continuous
majorization \eqref{cont-majorization-def}. If necessary, this is
the equivalence that we will use.

\subsection{Probability distributions and majorization}

When the functions or vectors represent probability distributions
the majorization adopts an intuitive interpretation.
For illustrate this, we consider the discrete version and the same
argument can be applied to the continuous case.

Let
$\mathbf{p}=(p_1,\ldots,p_n)$ and $\mathbf{q}=(q_1,\ldots,q_n)$
be two $n$-dimensional probability distributions,
i.e. $p_i,q_i\geq 0$ for all $i=1,\ldots,n$ and
$\sum_{i=1}^{n}p_i=\sum_{i=1}^{n}q_i=1$. It is clear that
$\frac{1}{n}\leq p_i,q_i\leq 1$ for all $i=1,\ldots,n$ so by the
definition \eqref{disc-majorization-def} we have that
\begin{equation}\label{infimum-supreme}
(\frac{1}{n},\ldots,\frac{1}{n})
\prec \mathbf{p},\mathbf{q} \prec (1,\ldots,0)
\end{equation}
where $(\frac{1}{n},\ldots,\frac{1}{n})$ and $(1,\ldots,0)$
are the uniform and the delta distributions respectively.
This means that, besides $\prec$ is a partial order on the
space of $n$-dimensional probability vectors, the uniform and the
delta distributions are the infimum and the supreme elements
respectively.

Physically, if we have an experiment with $n$ possible
outcomes $x_i$ represented by the space of events $\Gamma=\{x_1,\ldots,x_n\}$
and $\mathbf{p}$ is a probability vector
(with $p_i$ the probability of that $i$th outcome occurs) then discrete
majorization says that the
distribution with the minimal information (the uniform one)
is majorized by $\mathbf{p}$, and in turn,
$\mathbf{p}$ is majorized by the maximal information distribution (the delta one).
In the next sections we exploit this idea for characterizing temporal evolutions of
continuous distributions from majorization.

\section{Temporal evolution of continuous distributions from
majorization}

In this section we consider a system described by a continuous
distribution $p(x,t)$ containing its maximal information about the dynamics
at time $t$ where $x$ represents a continuous variable in $(0,1)$.
We focus our study in the relationship between the dynamics and the
the majorization \eqref{cont-majorization-def} restricted to
the set of time-parameterized distributions $\mathcal{P}=\{p_t: t\geq 0\}$ with
$p_t$ the distribution of the system at time $t$, i.e. $p_t=p(x,t)$.
Given an arbitrary dynamics, the set $\mathcal{P}$ represents the evolution
of the system from $t=0$ to $t=\infty$.
%Since the functions employed in physics are typically differentiable,
%without loss of generality we assume that the convex functions $\phi$
%in the continuous majorization are also differentiable\footnote{This condition
%is not restrictive since differentiable functions in
%$\mathcal{L}_{\textrm{cx}}(I)$ are
%dense in $\mathcal{L}_{\textrm{cx}}(I)$.}
A first consequence from continuous majorization applied to
$\mathcal{P}=\{p_t: t\geq 0\}$ is given by the following Lemma.
\begin{lemma}\label{lemma-ordered-chain}
The following statements are equivalent:
\begin{itemize}
    \item [$(I)$] $\mathcal{P}=\{p_t: t\geq 0\}$ is an ordered chain by majorization
with $p_{t_2}\prec p_{t_1}$ for all $t_1\leq t_2$ (i.e. the
distribution at time $t$ is majorized by all the precedent ones).
  \item [$(II)$] The function $\lambda_{\phi}(t):[0,\infty]\rightarrow \mathbb{R}$,
  $\lambda_{\phi}(t)=\int_{0}^{1}\phi(p_t(x))dx$ is decreasing for all
  $\phi\in \mathcal{L}_{\textrm{cx}}(I)$.
\end{itemize}
In turn, $(I)$ or $(II)$ imply
that $\lambda_{\phi}'(t)=\int_{0}^{1}\phi'(p_t(x))(\partial p_t/\partial t)dx\leq 0$
for all differentiable $\phi\in \mathcal{L}_{\textrm{cx}}(I)$, and that
the initial distribution $p_0$ is the supreme of $\mathcal{P}$.
\end{lemma}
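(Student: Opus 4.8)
The plan is to recognize that the biconditional $(I)\Leftrightarrow(II)$ is essentially a direct unwinding of the definition of continuous majorization \eqref{cont-majorization-def}, and then to read off the two stated consequences from either equivalent formulation. Concretely, since $p_{t_2}\prec p_{t_1}$ means by \eqref{cont-majorization-def} exactly that $\int_0^1\phi(p_{t_2}(x))\,dx\leq\int_0^1\phi(p_{t_1}(x))\,dx$ for every $\phi\in\mathcal{L}_{\textrm{cx}}(I)$, the statement ``$p_{t_2}\prec p_{t_1}$ for all $t_1\leq t_2$'' is word-for-word the statement ``$\lambda_\phi(t_2)\leq\lambda_\phi(t_1)$ whenever $t_1\leq t_2$, for every convex $\phi$,'' which is the monotonicity in $(II)$. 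So the heart of the argument is a translation between the two languages, not a computation.

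First I would prove $(I)\Rightarrow(II)$. Fixing an arbitrary $\phi\in\mathcal{L}_{\textrm{cx}}(I)$ and arbitrary times $t_1\leq t_2$, hypothesis $(I)$ gives $p_{t_2}\prec p_{t_1}$; applying \eqref{cont-majorization-def} to this pair yields $\lambda_\phi(t_2)\leq\lambda_\phi(t_1)$. As $t_1\leq t_2$ and $\phi$ were arbitrary, $\lambda_\phi$ is decreasing for every convex $\phi$, which is $(II)$. The converse $(II)\Rightarrow(I)$ is the same chain of equalities read backwards: for any $t_1\leq t_2$, monotonicity of $\lambda_\phi$ gives $\lambda_\phi(t_2)\leq\lambda_\phi(t_1)$ for every convex $\phi$, which is precisely the defining inequality \eqref{cont-majorization-def} for $p_{t_2}\prec p_{t_1}$; quantifying over all $t_1\leq t_2$ then shows that any two members of $\mathcal{P}$ are comparable with the earlier one majorizing the later, so $\mathcal{P}$ is an ordered chain in the asserted direction.

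For the two implied assertions, I would argue as follows. The differentiated inequality $\lambda_\phi'(t)\leq 0$ is immediate from $(II)$, since a differentiable nonincreasing function has nonpositive derivative, while the explicit integral representation $\lambda_\phi'(t)=\int_0^1\phi'(p_t(x))(\partial p_t/\partial t)\,dx$ is obtained by differentiating $\lambda_\phi(t)=\int_0^1\phi(p_t(x))\,dx$ under the integral sign via the chain rule. For the supremacy of $p_0$, I would specialize $(I)$ to $t_1=0$, obtaining $p_t\prec p_0$ for every $t\geq 0$; since $p_0\in\mathcal{P}$, this says $p_0$ majorizes every element of $\mathcal{P}$, which is exactly the definition of the supreme given in Section~\ref{sec prelim}.

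The only genuine technical point is the interchange of differentiation and integration used to produce the formula for $\lambda_\phi'(t)$: justifying Leibniz's rule requires mild regularity and domination hypotheses on $\partial p_t/\partial t$ (and finiteness of the integrals, in the spirit of the ``whenever the integrals exist'' proviso of \eqref{cont-majorization-def}). Everything else is a direct restatement of the majorization definition and of the notion of supreme, so I expect no further obstacle; the lemma is, in effect, a reformulation lemma rather than one carrying hidden combinatorial or analytic difficulty.
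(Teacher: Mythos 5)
Your proposal is correct and follows essentially the same route as the paper, whose entire proof is the one-line remark that the lemma ``follows directly from the definition \eqref{cont-majorization-def} applied to $\mathcal{P}$''; you simply spell out that translation explicitly, including the specialization $t_1=0$ for the supremacy of $p_0$ and the (appropriately flagged) differentiation under the integral sign for the derivative formula.
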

\begin{proof}
It follows directly from the definition
\eqref{cont-majorization-def} applied to $\mathcal{P}$.
\end{proof}
The content of Lemma \ref{lemma-ordered-chain}
is that when we have a dynamics satisfying $(I)$ then
this can be characterized by the behavior increasing of the
functions $\lambda_{\phi}(t)=\int_{0}^{1}\phi(p_t(x))dx$ for all convex function
$\phi$, where
the initial distribution majorizes all the subsequent evolved ones.
Thus, a first simple connection between dynamics and continuous
majorization is provided. We shall see that the hypothesis $(I)$
is compatible with the intuitive idea of that, in diffusion phenomena,
as the distribution evolves it tends to spread along
its domain (thus approaching to the uniform one that is
the infimum element). This is precisely the content
of the next result.
\begin{lemma}\label{lemma-asymptotic}
Assume the hypothesis $(I)$ of Lemma \ref{lemma-ordered-chain}
and the existence of an asymptotic distribution
at $t\rightarrow\infty$, denoted by $p_{\infty}$,
satisfying $\partial p_{\infty}/\partial t=0$ then we have
\begin{equation}\label{lemma2-1}
p_{\infty} \prec p_t \prec p_0 \quad , \quad \forall \ t\geq 0
\end{equation}
and
\begin{equation}\label{lemma2-2}
\lambda_{\phi}'(p_{\infty})=0 \quad , \quad \forall \ \phi\in\mathcal{L}_{\textrm{cx}} \ \textrm{differentiable}
\end{equation}
\end{lemma}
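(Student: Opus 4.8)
The plan is to establish the two displayed claims separately, extracting the ordering \eqref{lemma2-1} from hypothesis $(I)$ (equivalently $(II)$ of Lemma~\ref{lemma-ordered-chain}) and the vanishing derivative \eqref{lemma2-2} from the stationarity $\partial p_{\infty}/\partial t=0$.

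First I would prove the chain \eqref{lemma2-1}. The right-hand inequality $p_t\prec p_0$ is immediate: applying hypothesis $(I)$ with $t_1=0\leq t_2=t$ gives $p_t\prec p_0$ for every $t\geq 0$. For the left-hand inequality $p_{\infty}\prec p_t$ I would pass to the limit through the equivalent formulation $(II)$. Fix a convex $\phi\in\mathcal{L}_{\textrm{cx}}$ and a time $t\geq 0$. For any $s\geq t$, hypothesis $(I)$ yields $p_s\prec p_t$, that is $\lambda_{\phi}(s)=\int_{0}^{1}\phi(p_s(x))\,dx\leq \int_{0}^{1}\phi(p_t(x))\,dx=\lambda_{\phi}(t)$. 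Since $\lambda_{\phi}$ is decreasing by $(II)$, the monotone limit $\lim_{s\to\infty}\lambda_{\phi}(s)$ exists (and is finite whenever the integrals exist); the essential step is to identify it with $\int_{0}^{1}\phi(p_{\infty}(x))\,dx$, which amounts to exchanging the limit $s\to\infty$ with the integration in $x$. Granting this interchange and letting $s\to\infty$ in the previous inequality gives $\int_{0}^{1}\phi(p_{\infty}(x))\,dx\leq \int_{0}^{1}\phi(p_t(x))\,dx$ for every convex $\phi$, which is exactly $p_{\infty}\prec p_t$ by the definition \eqref{cont-majorization-def}.

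The second claim \eqref{lemma2-2} is a direct consequence of stationarity and requires no limiting argument. Evaluating the derivative formula of Lemma~\ref{lemma-ordered-chain}, namely $\lambda_{\phi}'(t)=\int_{0}^{1}\phi'(p_t(x))\,(\partial p_t/\partial t)\,dx$, at the asymptotic state and inserting $\partial p_{\infty}/\partial t=0$ gives $\lambda_{\phi}'(p_{\infty})=\int_{0}^{1}\phi'(p_{\infty}(x))\cdot 0\,dx=0$ for every differentiable convex $\phi$, as required.

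The main obstacle is the interchange of limit and integral used to identify $\lim_{s\to\infty}\lambda_{\phi}(s)$ with $\lambda_{\phi}(p_{\infty})$. This is where a regularity hypothesis on the family $\{p_s\}$ enters: the exchange can be secured by dominated convergence once $\{\phi(p_s):s\geq 0\}$ admits an integrable majorant on $(0,1)$, by monotone convergence whenever $s\mapsto\phi(p_s(x))$ is monotone for a.e.\ $x$, or more generally by uniform integrability of that family. For bounded distributions on the finite interval $(0,1)$ this holds under mild assumptions, and I would state the mode of convergence $p_s\to p_{\infty}$ explicitly so that the passage to the limit is rigorous.
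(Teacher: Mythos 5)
The paper states this lemma with no proof at all, so there is nothing to compare step by step; your argument supplies what the authors leave implicit, and it is the natural (and surely intended) route. The right inequality $p_t\prec p_0$ is indeed immediate from $(I)$, and passing to the limit in $\lambda_{\phi}(s)\leq\lambda_{\phi}(t)$ is the only sensible way to get $p_{\infty}\prec p_t$. You are also right to isolate the interchange of limit and integral as the one genuine gap: the lemma never specifies in what sense $p_t\rightarrow p_{\infty}$ (stationarity $\partial p_{\infty}/\partial t=0$ alone says nothing about convergence), so without an added hypothesis --- pointwise a.e.\ convergence together with an integrable majorant for $\{\phi(p_s)\}$, or uniform integrability of that family --- the left inequality in \eqref{lemma2-1} does not follow; flagging this is a service the paper does not perform. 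One small improvement to your second paragraph: since each $p_s$ is a probability density on $(0,1)$, Jensen's inequality gives $\lambda_{\phi}(s)=\int_{0}^{1}\phi(p_s)\,dx\geq\phi\!\left(\int_{0}^{1}p_s\,dx\right)=\phi(1)$, so the decreasing limit is automatically finite and your parenthetical caveat can be removed. Your reading of \eqref{lemma2-2} --- inserting $\partial p_{\infty}/\partial t=0$ into the derivative formula of Lemma~\ref{lemma-ordered-chain} --- is exactly what the paper's (abusive) notation $\lambda_{\phi}'(p_{\infty})$ intends, though strictly one should phrase it as $\lim_{t\rightarrow\infty}\lambda_{\phi}'(t)=0$ under the same convergence assumptions.
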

The relevance of Lemmas \ref{lemma-ordered-chain}
and \ref{lemma-asymptotic} is that, when
the time evolution preserves majorization in the sense
of condition $(I)$, they allow to characterize
the dynamics in terms of
continuous majorization only, where the initial state
is the supreme and the stationary one is the infimum.
In particular,
%we can analyze the behavior of the
for the
Shannon-Gibbs entropy functional $S[p]=-\int \psi(p(x))dx$ given by the convex function
$\psi(x)=x\ln x$ the condition $(II)$ says that
$-\lambda_{\psi}(t)$ is an increasing function of $t$ so
$-\lambda_{\psi}(t_1)\leq -\lambda_{\psi}(t_2)$ $\forall t_1\leq t_2$,
i.e. $S[p_{t_1}] \leq S[p_{t_2}]$ $\forall t_1\leq t_2$, in accordance
with the Second Law of thermodynamics.
%Encouraged by the description of the Second Law as a particular
%case for $\psi(x)=x\ln x$ under the condition of a dynamics
%satisfying $(I)$,
The following majorized version of the
Second Law is obtained from Lemma \ref{lemma-ordered-chain}:
\begin{corollary}\label{majorizes-second-law}
If the dynamics is preserved by majorization
in the sense of condition $(I)$ of Lemma \ref{lemma-ordered-chain}
then a \textbf{Majorized Second Law (MSL)} is satisfied
\begin{equation}\label{majorized-SL}
\Delta \lambda_{\phi}=\lambda_{\phi}(t_2)-\lambda_{\phi}(t_1)\leq 0 \quad \, \quad \forall t_1\leq t_2
\end{equation}
for all $\phi\in \mathcal{L}_{\textrm{cx}}(I)$. In particular,
for $\phi(x)=x\ln x$ the standard Second Law is recovered. Thus, we also have
\begin{equation}\label{MSL-SL}
\textrm{Majorized Second Law (MSL)} \quad \Longrightarrow \quad \textrm{Second Law (SL)}
\end{equation}
\end{corollary}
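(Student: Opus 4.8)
The plan is to derive the corollary as a direct specialization of the equivalence already established in Lemma~\ref{lemma-ordered-chain}. First I would invoke the implication $(I)\Rightarrow(II)$: by hypothesis the dynamics satisfies condition $(I)$, so condition $(II)$ holds, meaning that $\lambda_{\phi}(t)=\int_{0}^{1}\phi(p_t(x))dx$ is a decreasing function of $t$ for every $\phi\in\mathcal{L}_{\textrm{cx}}(I)$. By the very definition of a decreasing function, $t_1\leq t_2$ forces $\lambda_{\phi}(t_2)\leq\lambda_{\phi}(t_1)$, which is exactly the inequality $\Delta\lambda_{\phi}=\lambda_{\phi}(t_2)-\lambda_{\phi}(t_1)\leq 0$ of \eqref{majorized-SL}. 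Since $\phi$ was arbitrary in $\mathcal{L}_{\textrm{cx}}(I)$, the MSL holds for all convex test functions simultaneously.

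Next I would specialize to $\phi(x)=x\ln x$. The step here is to check that this is a legitimate choice: $\psi(x)=x\ln x$ has $\psi''(x)=1/x>0$ on the range of admissible density values, hence $\psi\in\mathcal{L}_{\textrm{cx}}(I)$ (with the convention $0\ln 0=0$ at the boundary), so the MSL inequality applies to it. Identifying $\lambda_{\psi}(t)=\int_{0}^{1}p_t(x)\ln p_t(x)\,dx=-S[p_t]$ with minus the Shannon--Gibbs entropy, the inequality $\lambda_{\psi}(t_2)-\lambda_{\psi}(t_1)\leq 0$ rewrites as $S[p_{t_1}]\leq S[p_{t_2}]$ for $t_1\leq t_2$, i.e. the standard Second Law.

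Finally, to obtain the implication \eqref{MSL-SL} I would simply observe that the MSL is a statement quantified over the entire class $\mathcal{L}_{\textrm{cx}}(I)$, whereas the SL is the single instance $\phi(x)=x\ln x$; hence assuming the MSL and instantiating at this particular $\phi$ yields the SL at once. I do not expect any genuine obstacle, since the content of the corollary is essentially bookkeeping on top of Lemma~\ref{lemma-ordered-chain}. The only point requiring a modicum of care is the admissibility of $x\ln x$ as a convex test function on the relevant interval and the existence of the defining integral, which the clause \emph{whenever the integrals exist} in \eqref{cont-majorization-def} already accommodates. One might also remark, to echo the abstract's claim that the converse fails, that a single convex function carries strictly less information than the whole family $\mathcal{L}_{\textrm{cx}}(I)$, so the SL cannot in general recover the MSL; proving this non-implication, however, is not demanded by the statement.
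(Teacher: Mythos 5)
Your proposal is correct and follows essentially the same route as the paper: the MSL is read off directly from the implication $(I)\Rightarrow(II)$ of Lemma~\ref{lemma-ordered-chain} (monotone decrease of $\lambda_{\phi}$), and the standard Second Law is recovered by instantiating at the convex function $\phi(x)=x\ln x$ and identifying $\lambda_{\phi}=-S$. Your added remarks on the admissibility of $x\ln x$ and the existence of the integrals are sound but not beyond what the paper implicitly assumes.
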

Interestingly, the equation \eqref{majorized-SL} defines an arrow of time,
composed by events that fulfill the condition $(I)$
along the time, which must be satisfied by all convex function
$\phi\in \mathcal{L}_{\textrm{cx}}(I)$. This will be discussed with some detail
in the forthcoming section.
Moreover, the implication
\eqref{MSL-SL} expresses that MSL is stronger than SL.
Next step is to study what kind of phenomena can be compatible with MSL.

\section{Applications}

In order to study
%test the reliability of the MSL and
what type
of dynamics can be obtained from continuous majorization, in this section
we illustrate the results with some examples. Then, we
come back to discuss an interpretation of the arrow of time
given by majorization.

\subsection{Dynamical systems: mixing}

One of the central concepts of dynamical systems theory
and statistical mechanics is the mixing condition, i.e. the
asymptotic vanishing of the correlations
between two subsets of phase space that are sufficiently separated in time.
In its usual definition in the language of distributions this reads as
\begin{eqnarray}\label{mixing}
&\exists f_{*}\in L^1(X) \ \textrm{such that} \ \forall f\in L^1(X),g\in L^{\infty}(X):\nonumber\\
&\lim_{t\rightarrow\infty}\int_X f(T_t(x))g(x)dx=\int_X f_*(x)g(x)dx
\end{eqnarray}
where $X$ is the phase space, $f_{*}$
is the equilibrium distribution of the system at $t\rightarrow\infty$
($f_{*}\circ T_t=f_{*}$), and
$T_t:X\rightarrow X$ is a continuous transformation, typically the
Liouville time evolution in classical mechanics. In particular, Eq. \eqref{mixing}
says that the measure $\mu_{*}(A)=\int_A f_{*}(x)dx$ is invariant under $T_t$.

Now assume the dynamics satisfies the condition $(I)$
of Lemma \ref{lemma-ordered-chain} and $X=(0,1)$.
In particular, for the convex functional
$\phi(x)=|x|$ we have that $||f\circ T_{t} ||_1\geq ||f\circ T_{t^{\prime}} ||_1$
for all $t\leq t^{\prime}$ so by Lebesgue dominated converge it follows that
the $\lim_{n\rightarrow\infty}||f\circ T_{t_n} ||_1=||f_{*}||_1$ exists
for an increasing sequence $t_1<t_2<\ldots<t_n<\ldots$ and $f_*\in L^1$.
In turn, this implies \eqref{mixing} and thus the system is mixing.

\subsection{Generalized Fokker-Planck equations}

Disordered and thermal molecular motion is macroscopically
characterized as diffusion phenomena of a net flux of particles from one
region to other. Under Markovian assumptions and making the passing to the continuum,
the discretized master
equation for the probability transition states becomes into the
Fokker-Planck equation (FPE).
Recently, a generalization of the FPE
(recovering the nonlinear and linear cases as special ones),
that links generalized entropic forms with the theorem $H$, was proposed in the form
[Curado, Nobre]
\begin{equation}\label{curado-nobre}
\frac{\partial p(x,t)}{\partial t}=-\frac{\partial \{F(x)\Psi[p(x,t)]\}}{\partial x}+
\frac{\partial}{\partial x}\left\{\Omega[p(x,t)]\frac{\partial p(x,t)}{\partial x} \right\}
\end{equation}
where $p(x,t)$ is the probability distribution
of the particles at time $t$, $F(x)=-\frac{d\varphi}{dx}$ is a conservative force
acting over the particles,
and $\Omega[p],\Psi[p]>0$ are nonnegative functionals.
From the functional $\lambda_{\phi}(t)$ we can relate majorization with the generalized
FPE \eqref{curado-nobre} as follows.
%Assuming the condition $(I)$ we have that $\lambda'_{\phi}(t)$ for all
%$\phi\in\mathcal{L}_{\textrm{cx}}(I)$ so
Integrating by parts we obtain
\begin{eqnarray}\label{majorization-FPE}
&\lambda'_{\phi}(t)=\left[\phi'(p_t(x))\int^{x}
ds\frac{\partial p(s,t)}{\partial t}\right]_{0}^1 \nonumber\\
&-\int_{0}^{1}\phi''(p_t(x))\frac{\partial p(x,t)}{\partial x}
\left(\int^{x}ds\frac{\partial p(s,t)}{\partial t}\right)dx
\end{eqnarray}
where the first term can be neglected (vanishing of the functions
in the extremes $x=0,1$) and a reasonable assumption is
to make $\int^{x}ds\frac{\partial p(s,t)}{\partial t}=\Omega[p(x,t)]\frac{\partial p(x,t)}{\partial x}$
with $\Omega[p(x,t)]>0$.
Replacing this in \eqref{majorization-FPE} and using that $\phi''\geq0$
we conclude
\begin{eqnarray}\label{majorization-FPE2}
\lambda'_{\phi}(t)=
-\int_{0}^{1}\phi''(p_t(x))\Omega[p(x,t)]\left(\frac{\partial p(x,t)}{\partial x}\right)^2
dx\leq 0
\end{eqnarray}
for all $\phi\in\mathcal{L}_{\textrm{cx}}(I)$ differentiable.
Hence, the generalized FPE \eqref{curado-nobre} with $F(x)=0$ satisfies
the condition $(I)$ of Lemma \ref{lemma-ordered-chain} and thus
the MSL is recovered.

\subsection{Quantum dynamics}

We analyze how continuous majorization can characterize
quantum dynamics.
For the sake of simplicity we consider that the set $\mathcal{P}$ is given by
the evolution of the eigenfunctions probability distributions, i.e.
$\mathcal{P}_{n}=\{|\psi_n(x,t)|^2: t\geq0\}$ with $n$ the energy index,
\begin{equation}\label{hamiltonian-dynamics}
i\hbar \frac{\partial \psi_n}{\partial t}=H \psi_n=E_n\psi_n
\end{equation}
and,
\begin{equation}\label{hamiltonian-energies}
E_n=\epsilon_n + i\gamma_n
\end{equation}
Eq. \eqref{hamiltonian-energies} expresses that
the Hamiltonian $H$ is not necessarily Hermitian,
typical of an open system dynamics. In this non-Hermitian
context, the measurable eigenergies of the system
are the $\epsilon_n$ while the $|\gamma_n|^2$ are proportional to the decay times.
It is clear that the usual unitary case is recovered when $\gamma_n=0$
for all $n$.
In order to verify if the dynamics \eqref{hamiltonian-dynamics}
preserves majorization in the sense of the condition $(I)$
of Lemma \ref{lemma-ordered-chain}, we calculate the derivate of
the function $\lambda_{\psi_n}(t)$ for each energy index $n$.
Doing this and using that
$\frac{d\psi_n}{dt}\psi_n^{*}+\frac{d\psi_n^{*}}{dt}\psi_n=2\frac{\gamma_n}{\hbar}|\psi_n|^2$
we obtain
\begin{equation}\label{quantum-majorization}
\lambda_{\phi}'(t)=\frac{2\gamma_n}{\hbar}\int_{0}^{1}\phi'(|\psi_n|^2)|\psi_n|^2dx
\  , \  \forall \ \phi\in\mathcal{L}_{\textrm{cx}}(I) \ \textrm{differentiable}
\end{equation}
where the domain of the variable $x$ of the eigenfunctions
$\psi_n(x,t)$ is assumed to be $(0,1)$.
Equation \eqref{quantum-majorization} is the starting point for
characterizing some types of quantum dynamics.

\emph{Case I: Hermitian dynamics $\gamma_n=0 \ \forall n$:}
From \eqref{quantum-majorization} we can see that
$\lambda_{\phi}'(t)$ for all $\psi_n$, which implies
that for all $\psi_n(x,t_1),\psi_n(x,t_2)$ and $t_1,t_2$
we have $|\psi_n(x,t_1)|^2 \prec |\psi_n(x,t_2)|^2$
and $|\psi_n(x,t_2)|^2 \prec |\psi_n(x,t_1)|^2$.
This means that the infimum and the supreme are always the
same $|\psi_n(x,t)|^2$ (with $t\geq0$ arbitrary)
along time, as expected for the Hamiltonian eigenstates
in an unitary dynamics. Accordingly, in this case MSL
gives that $\Delta \lambda_{\phi}=0$, as the system evolves.

\emph{Case II: non-Hermitian dynamics $\gamma_j\neq0$:}
Given $\phi\in\mathcal{L}_{\textrm{cx}}(I)$ differentiable,
since $\phi$ is convex then $\phi''$ is nonnegative
which implies that $\phi'$ is increasing. In particular,
from $|\psi_n(x,t)|^2\geq 0$ it follows that
$\phi'(|\psi_n|^2)\geq \phi'(0)$. Assuming
$\gamma_j<0$ for some $j$ we obtain
\begin{equation}\label{quantum-majorization-nonhermitian}
\lambda_{\phi}'(t)
\leq \frac{2\gamma_j}{\hbar}\phi'(0)
\quad  , \quad   \forall \ \phi\in\mathcal{L}_{\textrm{cx}}(I) \ \textrm{differentiable}
\end{equation}
In this case we see that, in order to
fulfill the condition $(I)$ we need $\phi'(0)\geq0$
($\phi'(0)\leq0$ if $\gamma_j>0$ respectively)
for all $\phi\in\mathcal{L}_{\textrm{cx}}(I)$, which
can be satisfied if $\phi\in\mathcal{L}_{\textrm{icx}}(I)$, thus
leading to a weak majorization.
Hence, for the case of a non-Hermitian dynamics we have that
$\mathcal{P}_{j}=\{|\psi_j(x,t)|^2: t\geq0\}$ is a ordered
chain by weak majorization when $\gamma_j<0$.

Moreover, from \eqref{quantum-majorization} an interplay between the sign of $\gamma_j$
and the Second Law can be depicted easily as follows.
Specializing \eqref{quantum-majorization} with
$\phi(x)=x\ln x$

\subsection{Arrow of time and majorization}

The problem of the arrow of time has found
partial solutions
%from the cosmological viewpoint
that are
linked with thermodynamics by means of the irreversibility
of the processes outside of equilibrium.
Thus, from a cosmological viewpoint the arrow of time
must be consistent with the standard models of the universe evolution,
for instance the Big Bang model precludes the expansion from a spatial-temporal
point of zero volume towards the current
universe structured by galaxies, stars, quasars
and so on. The underlying element behind the arrow time is the
irreversibility which is characterized
in terms of a functional of the state of the system,
the thermodynamical entropy.
By the second law of thermodynamics, the
variation of the thermodynamical entropy of a system
of must be non-negative. Interestingly, if we consider
the convex character
of the entropy functional, we can see that
continuous majorization offers a generalized
second law for distributions governed by evolutions
given by the Theorem. Even more, majorization
indicates that all convex functional must obey
second law
when an ordered chain of evolved distributions is
considered, as
we can see from Lemma \ref{lemma-ordered-chain}. This statement
results more stronger than the standard
second law referred only to the entropy functional,
due to the fact that the set of all states
(expressed by probability distributions) that one
can consider physically is partially ordered by majorization.
In other words, the second law concerning
only the thermodynamical entropy allows the possibility
of violation of majorization and thus a given initial
state not necessarily can be majorized by an evolved one,
and viceversa.
From all this we conclude that the arrow of time given by majorization
express only a portion of all the possible sequences of distributions
evolved along time, the ones that can be totally majorized by the time,
and therefore is expected to find physical dynamics not characterized by majorization.

\section{Conclusions}\label{sec conclusions}

Assuming the existence of a finite characteristic time scale and a mixing phase space, we have presented a method for obtaining the Kolmogorov--Sinai entropy of a quantum system as a function of the time scale.
The three ingredients that we used were: 1) the natural graininess of the quantum phase space given by
the Uncertainty Principle,
2) a time rescaled KS--entropy that allows one to introduce the
characteristic time scale of the system as a parameter, and 3) a mixing condition at the (finite) characteristic time scale.

In summary, our contribution is two--fold. On the one hand, the correspondence between classical and
quantum elements of the mixing formalism provides a framework for exporting theorems and results of
the classical ergodic theory to quantum language %(lemmas \ref{lemma2} and \ref{lemma3})
which is
schematized in Fig. \ref{fig:classical-quantum}. On the other hand, the previous steps for obtaining
the equation (\ref{time scale1})
can be considered as a rigorous proof of the existence of the logarithmic time scale when the dynamics
 in quantum phase space is mixing at a finite time, thus providing a theoretical bridge between the ergodic
theory and graininess of quantum mechanics.

Finally, it is worth noting that a non--zero Kolmogorov-Sinai entropy does not necessarily
require strong classical mixing properties. Since in the present work it is assumed
a mixing at finite time dynamics, this precisely implies a mixing one.
When this assumption is not satisfied, one could suggest that the dependence of the Ehrenfest--time does
not need to be logarithmical in $q$ since regular and weak chaotic systems obey
a power law in their correlation decays.

As was accomplished in \cite{paper0,paper1,paper2,paper3,paper4,paper5}, we hope that the use of more
results of the ergodic hierarchy allows to shed light on quantum chaos
theory in future researches.

\section*{Acknowledgments}
We specially acknowledge the comments of the reviewers which helped us to improve the present contribution.
This work was partially supported by CONICET and Universidad Nacional de La Plata, Argentina.
% -\0p

\section*{References}

\end{document}